\theoremstyle{definition}
\theoremstyle{definition}
\newtheorem{theorem}{Theorem}[subsection]
\newtheorem{lemma}{Lemma}[subsection]
\theoremstyle{definition}
\newcommand{\bDiamond}{\mathbin{\Diamond}}
\date{}
\title{A Generic Efficient Biased Optimizer for Consensus Protocols}
	\author{Yehonatan Buchnik and Roy Friedman\\
		Computer Science Department\\
		Technion\\
		\texttt{\{yon\_b,roy\}@cs.technion.ac.il}}
\begin{document}
\maketitle

\begin{abstract}
Consensus is one of the most fundamental distributed computing problems.
In particular, it serves as a building block in many replication based fault-tolerant systems and in particular in multiple recent blockchain solutions.
Depending on its exact variant and other environmental assumptions, solving consensus requires multiple communication rounds.
Yet, there are known optimistic protocols that guarantee termination in a single communication round under favorable conditions.

In this paper we present a generic optimizer than can turn any consensus protocol into an optimized protocol that terminates in a single communication round whenever all nodes start with the same predetermined value and no Byzantine failures occur (although node crashes are allowed).
This is regardless of the network timing assumptions and additional oracle capabilities assumed by the base consensus protocol being optimized.

In the case of benign failures, our optimizer works whenever the number of faulty nodes $f<n/2$.
For Byzantine behavior, our optimizer's resiliency depends on the validity variant sought.
In the case of classical validity, it can accommodate $f<n/4$ Byzantine failures.
With the more recent \emph{external validity function} assumption, it works whenever $f<n/3$.
Either way, our optimizer only relies on oral messages, thereby imposing very light-weight crypto requirements.
\end{abstract}
	
\section{Introduction}

Consensus~\cite{Lamport1982} is one of the most studied problems in distributed computing~\cite{AW98,Raynal2018book}.
This is because any kind of coordination requires reaching an agreement.
Also, in replicated systems, nodes generally need to agree on the order in which updates are applied to the system in order to maintain a consistent state.
Yet, despite its simplicity and intuitive appeal, several impossibility results daunt researchers and designers of distributed systems.
First and foremost, the seminal FLP results proves that consensus is not solvable in asynchronous distributed environments in which a single process may fail by crashing~\cite{Fischer1985}.
Also, lower bounds have been shown on the number of communication rounds needed to solve consensus~\cite{KeidarR03,Martin2006} as well as inherent tradeoffs between the availability and consistency of distributed systems~\cite{FB96}.

Yet, it has been noted that in various settings some conditions are more likely to hold than others.
This has led to the development of various optimistic consensus protocols that terminate quickly when certain favorable conditions hold.
These may include lack of failures, or at least lack of certain types of failures, periods of network synchrony in an asynchronous network, as well as the composition of the proposed values.

In this work we explore the impact of assuming that one specific value is more likely to be proposed by all correct nodes.
For example, in several blockchain protocols, the protocol can be viewed as repeatedly having a leader proposing a block and then all nodes running a binary consensus protocol for deciding whether to accept this value~\cite{tendermint,BF2019full,Crain2017}.
Hence, in the ``common case'' where there are no failures and the network behaves in a synchronous manner during that block generation process, all nodes will vote to accept the proposed block.
Similarly, in a control system one can expect that under normal operating conditions all correct replicas will usually propose taking the same course of action as all are exposed to the same system state and sensor readings~\cite{B777b}.

To that end, we develop a generic optimizer for consensus protocols which \emph{always} terminates in a single communication round when all nodes propose the same a-priory preferred value $v$ and no Byzantine failures manifest (yet, they can be tolerated).
Specifically, our optimizer can be applied to both the benign failures and Byzantine failures models and is independent of any other network and oracle assumptions needed by the base consensus protocol it optimizes.
In the case of benign failures, we only require that the number of faulty nodes is bounded by $f<n/2$.
Here, whenever all nodes propose the same predefined value, the optimizer will always terminate in a single all-to-all communication~round.

For the Byzantine failure model, we distinguish between two definitions of the consensus problem, one that is based on classical Byzantine validity vs. the more recently proposed external validity definition~\cite{cachin-survey}.
For classical validity, our optimizer requires $f<n/4$ and always terminates in one communication round whenever all nodes propose the same anticipated value and no Byzantine failures occur.
Further, when Byzantine failure do occur, the termination and safety of the overall construction is upheld.

When the consensus definition requires external validity, our optimizer can tolerate $f<n/3$.
Under the assumption that either there is only two possible valid proposed values (binary consensus), or that the base protocol satisfies classical validity, our optimizer always terminates in one communication round whenever all nodes propose the same anticipated value and no Byzantine failures occur.
As before, the overall correctness is maintained even when Byzantine failures do occur.

Note that sometimes the external validity function requires proposed values to carry a cryptocraphic proof.
As cryptographic proofs are expected to be much larger than the value itself, we can further reduce the communication cost of the protocol in the optimistic phase by only broadcasting the values themselves in the first phase.
Only if no decision can be made after the first phase, then the proofs are being broadcast so they can be fed to the validity functions.
This means an additional communication phase when the first optimistic phase fails, but significantly smaller messages when the first phase succeeds.
Hence, this last optimization is beneficial in situations where the first phase is likely to succeed in the vast majority of consensus invocations (and the proofs are substantially larger than values).

One critical requirement when dealing with Byzantine failures is to disallow impersonation, which is often ensured through cryptography.
There are two common models for preventing impersonation, namely \emph{oral messages} and \emph{signed messages}.
Oral messages can only guarantee to a receiver of a message which node has sent the message.
Hence, it can be implemented using MAC tags.
In practice, this can be made transparent to the application by utilizing transport level secure communication such as TLS, HTTPS, etc.
Signed messages, on the other hand, include a publicly verifiable signature of the sender on the value being sent.
This enables one node to prove to another node that it has received a certain value from a given third party.
Hence, the signature must be explicitly generated by the application's code, and requires either private/public key signatures and verification, or attaching a vector of MAC tags, one for each node, on each message.
In other words, signed messages are more expensive and cumbersome to use.
Our optimizer only requires oral messages when dealing with Byzantine failures.
Hence, its cryptography related space and CPU time are minimal.

Finally, we also show a lower bound on the minimal resiliency required to terminate in one communication round in the Byzantine failure model subject to classical validity.
Specifically, we show that for classical validity there is no asynchronous Byzantine consensus protocol that can ensure termination in one communication round after receiving only $n-f$ messages when $n=3f+1$.

\paragraph*{Summary of contributions:} We develop a generic optimizer that ensures fast termination in one communication round when all processes start with the same value and Byzantine failures do not manifest.
We instantiate the optimizer for both the benign and Byzantine failure model with optimal resiliency and formally prove their correctness.
We also show a resiliency lower bound for such fast termination in the Byzantine failure model for solving consensus with classical validity.

\paragraph*{Paper Organization}
The rest of this paper is organized as follows: We survey other optimized consensus protocols in Section~\ref{sec:related}.
The formal model and problems statements appear in Section~\ref{sec:prelim}.
We present our generic optimizer in Section~\ref{sec:biased} and show the lower bound in Section~\ref{sec:lower}. 
Finally, we conclude with a discussion in Section~\ref{sec:discuss}.

\section{Related Work}
\label{sec:related}

The first work to explore one communication round consensus in the benign failure model is~\cite{BBGMR01}.
The basic protocol in~\cite{BBGMR01} requires $f<n/3$.
That protocol is also extended to support a preferred value which improves the resiliency requirement to $f<n/2$, similar to our work.
The main contribution of this paper compared to~\cite{BBGMR01} is in our exploration of this problem under Byzantine failures and in the fact that we present a single generic optimizer for both failure models.

The work of~\cite{Friedman:2005} explored simple Byzantine consensus protocols that can terminate in a single communication round whenever all nodes start with the same value and certain failures do not manifest.
Yet, the probabilistic protocol of~\cite{Friedman:2005} required $f<n/5$ while their deterministic protocol needed $f<n/6$.
In contrast, our optimizer when instantiated to Byzantine failures can withstand up to $f<n/4$ with the classical validity definition (and $f<n/3$ with external validity, which was not explored in~\cite{Friedman:2005}).
This is due to biasing the consensus into preferring a certain value.
The price that we pay compared to~\cite{Friedman:2005} is that if all nodes start with the non-preferable value and the respective failures do not manifest, the protocol in~\cite{Friedman:2005} would terminate in a single communication step while our optimizer would have to invoke the full protocol.

Traditional deterministic Byzantine consensus protocols, most notably PBFT~\cite{Castro1999} require at least 3 communication rounds to terminate.
Multiple works that reduce this number have been published, each presenting a unique optimization.
The Q/U work presented a client driven protocol~\cite{AGGRW05} which enables termination in two communication rounds when favorable conditions are met.
Yet, its resiliency requirement is $f<n/5$, compared to our $f<n/4$ for the classical validity and $f<n/3$ for external validity.
The HQ work improved the resiliency of Q/U to $f<n/3$, yet does not perform well under high network load~\cite{CMLRS}.
Also, our optimizer is generic whereas Q/U and HQ are specialized solutions, each tailored to its intricate protocol.

The Fast Byzantine Consensus (FaB) protocol was the first protocol to implement a Byzantine consensus protocol that terminates in two communication phases in the normal case while requiring $f<n/5$~\cite{Martin2006}.
The normal case in~\cite{Martin2006} is defined as when there is a unique correct leader, all correct acceptors agree on its identity, and the system is in a period of synchrony.
This protocol translates into a $3$ phase state machine replication protocol.
Another variant can accommodate $n\geq 3f + 2b + 1$, where $b\leq f$ is the upper bound on non-leaders suffering Byzantine failures.

Zyzzyva is a client driven protocol~\cite{Kotla2007} which terminates after $3$ communication rounds (including the communication between the client and the replicas) whenever the client receives identical replies from all $3f+1$ replicas.
Our optimizer obtains termination in a single communication round among the replicas even when upto $f$ of them may crush or be slow.
This is by relying on all-to-all communication, and ensuring fast termination only when the preferred value is included in the first $n-f$ replies.
Also, our optimizer is generic while Zyzzyva and FaB are specialized solutions.

A generic construction for optimized Byzantine consensus protocols appears in~\cite{Guerraoui2010} along with its Aliph and Azyzzyva instantiations.
The construction in~\cite{Guerraoui2010} enables switching between Byzantine consensus protocols depending on the changing environment conditions in a safe manner.
In particular, it is possible to switch from an optimistic fast protocol that fails to terminate to a recovery protocol that would ensure overall correctness.
Both Aliph and Azyzzyva are client driven protocols, which require receiving timely identical replies from all $3f+1$ replicas.
Hence, our optimizer can terminate quickly in worse environment conditions if the preferred value is also the one supported by most nodes.

The condition based approach for solving consensus identifies various sets of input values that enable solving consensus fast~\cite{Mostefaoui2003}.
This is by treating the set of input values held by all processes as an input vector to the problem.
Specifically, the work in~\cite{FriedmanMRR07} showed that when the possible input vectors correspond to error correcting codes, then consensus is solvable in a single communication round regardless of synchrony assumptions.



\section{Preliminaries}
\label{sec:prelim}

\paragraph*{Basics:}
We consider a distributed system consisting of $n$ nodes communicating by sending messages over a network.
We assume that the network maintains \emph{integrity}, meaning that messages delivered by the network were indeed sent by their claimed sender and that their content is not corrupted.
The level of \emph{reliability} of the network defines which portions and under which circumstances messages sent between two nodes must be delivered by the network.
Additionally, the level of \emph{synchrony} is the system defines if and what type of bounds exist on the latency between sending a message and its recipient and whether nodes have access to clocks and if so, how synchronized these clocks are.
Our optimizer mechanism only requires that during its optimizing phase, messages transmitted between two correct nodes (to be defined shortly) are eventually delivered.
In particular, we do not assume anything about the synchrony of the system.
When the base protocol we are optimizing is invoked, the network reliability and synchronization assumptions under which that protocol was designed to work must hold.

\paragraph*{Failure Modesl and Resiliency:}
Out of the $n$ nodes in the system, up to $f$ nodes may be \emph{faulty} while the others are \emph{correct}.
In the benign failure model, nodes may only fail by \emph{crashing}, i.e., they might halt their execution.
In the Byzantine failures model, faulty nodes may deviate arbitrarily from their protocol.
Yet, given our integrity assumptions, nodes may not impersonate each other.
In particular, we assume that there are no \emph{Sybil} attacks.
The ratio between $f$ and $n$ is known as the \emph{resiliency} level of the system.
The level of synchrony, reliability, failure model, and the problem being addressed all impact the maximal level of resiliency that can be obtained~\cite{Raynal2018book}.

Cryptography is a common way to thwart impersonation.
Here we can distinguish between the \emph{oral messages} model and the \emph{signed messages} model.
Oral messages can only ensure to the receiver that the message it received from the network was indeed sent by its claimed sender.
In practice, this can be implemented using MACs, which are relatively cheap, and is often made transparent to the protocol's code by relying on secure transport protocols such as TLS or HTTPS.
Signed messages enable verifying that a specific value was sent by a specific node, and hence can be used to securely pass values through intermediary nodes, or verify that the same value was sent to all nodes.
In practice, the signed messages model is typically implemented through asymmetric key cryptography~\cite{crypto-textbook} or attaching vectors of MAC tags~\cite{Castro1999}.
Our optimizer mechanism only requires oral messages (in the Byzantine failures model).

\paragraph*{Benign Consensus:}
As mentioned before, in this paper we address the consensus problem.
We start with the common definition of consensus in the benign failure model and then extend it to the Byzantine failure model.
In the consensus problem each node $p_i$ has an initial value $v_i$, also known as the proposed value of $p_i$.
The nodes must each decide on a value such that:
\begin{description}
	\item[Validity] Any decided value must be proposed by some process.
	\item[Agreement] All correct processes decide on the same value.
	\item[Termination] Every correct process eventually decides.
\end{description}

\paragraph*{Byzantine Consensus:}
When Byzantine failures are considered, the value proposed by a Byzantine process is somewhat meaningless, since a Byzantine node may propose an arbitrary value and even pretend to propose different values to different nodes.
Hence, a common approach for defining Byzantine consensus is to require:
\begin{description}
	\item[Byzantine Validity] If all correct processes propose the same value, then this is the only value that can be decided.
	\item[Byzantine Agreement] Same as agreement.
	\item[Byzantine Termination] Same as termination.
\end{description}

\paragraph*{External Validity Byzantine Consensus:}
A shortcoming of Byzantine validity is that whenever not all correct nodes start with the same value, any value may be decided.
Further, in some applications, such as blockchains, even a Byzantine node may propose a value that is valid w.r.t. the systems goals.
This leads to the definition of external validity Byzantine consensus~\cite{cachin-survey}.
Here, one assumes an external boolean validity function \texttt{valid}($v$) whose output is \textsc{true} iff $v$ is a valid proposed value and requires:
\begin{description}
	\item[Byzantine External Validity] Any decided value $v$ must satisfy \texttt{valid}($v$) = \textsc{true}.
	\item[Byzantine Agreement] Same as agreement.
	\item[Byzantine Termination] Same as termination.
\end{description}

\section{The Biased Optimizer}
\label{sec:biased}

We present the generic optimizer in Section~\ref{sec:generic}.
This is followed by the instantiates to the various failure models: the benign failure model is discussed in Section~\ref{sec:benign}, Byzantine failures with classic validity are handled in Section~\ref{sec:byzantine-classic}, while Byzantine failures with external validity are addressed in Section~\ref{sec:byzantine-external}.

\subsection{Generic Optimizer}
\label{sec:generic}

Algorithm~\ref{alg:gen} depicts the pseudocode for the generic optimizer algorithm from the point of view of node $p$.
Initially, each node broadcast its proposed value to all others (Line~\ref{OC:2})\footnote{
	Here the term broadcast is equivalent to sending the same message point-to-point to all other nodes.}.
Next, each node waits until it receives at least $n-f$ proposals from distinct nodes.
If all these values are the same as the biased value, the node decides $v$ and returns.
Otherwise, if the received proposals match an adoption criteria (\texttt{AdoptionCriteria}) for the value $v$, the local node invokes the standard consensus protocol with a value $v$ (Line~\ref{OC:7}).
Otherwise, the standard consensus protocol is invoked with the originally proposed value for the local node $u$ (Line~\ref{OC:9}).
Finally, if a node $p$ has already decided in the optimistic phase, but it notices that the standard consensus protocol has been invoked by another process (Line~\ref{OC:11}), then $p$ invokes the consensus with the value $v$ (the only one $p$ could have decided on).
This is in order to ensure that enough nodes participate in the standard consensus protocol to enable its termination.

\begin{algorithm}[t]
	\caption{$v$-Biased Optimizer -- code for $p$}
	\label{alg:gen}
	\footnotesize
	\begin{algorithmic}[1]
		\Procedure{OptimizedConsensus$_{v,\mathtt{Consensus},\mathtt{AdoptionCriteria}}$}{$v'$}
		\State $\mathit{broadcast}(v')$ \label{OC:2}
		\State \textbf{wait until} $n-f$ proposals $\hat{v}$ have been received \label{OC:3}
		\State $\mathit{votes}=$ \{received proposals\}
		\If {$\mathit{votes}=\{v\}$} 
		\State Decide $v$ \label{OC:5}
		\State Return $v$
		\EndIf
		\If {$\mathit{votes}$ matches \texttt{AdoptionCriteria}}
		\State Return \texttt{Consensus}.Propose($v$) \label{OC:7}
		\Else
		\State Return \texttt{Consensus}.Propose($v'$) \label{OC:9}
		\EndIf 
		\EndProcedure
		\\
		\If {invoked Decide $v$ and some node invoked \texttt{Consensus}.Propose($\hat{v}$)}
		\State \texttt{Consensus}.Propose($v$) \Comment{this is executed at most once} \label{OC:11}
		\EndIf
	\end{algorithmic}
\end{algorithm}

\subsection{Instantiation to Benign Failures}
\label{sec:benign}

\begin{figure}[t]
	\center{
		\includegraphics[scale=0.46]{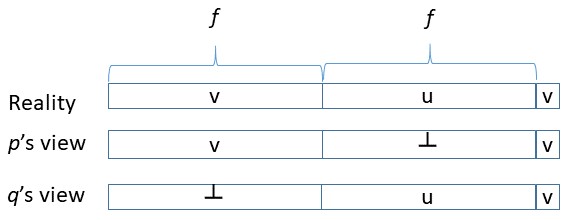}
	}
	\caption{A worst case scenario for the benign failure model and $f<n/2$. If some node $p$ decides after receiving $f+1$ votes for $v$, then any other node $q$ must receive at least one instance of $v$.}
	\label{fig:benign}
\end{figure}

For benign failures, the failure resiliency is $1/2-\epsilon$, that is, we assume $f<n/2$.
The \texttt{AdoptionCriteria} is simply to check whether the collection of received \textit{votes} includes at least one instance of the biased value $v$.
Here, if at least one node has received only (the preferred) $v$ values in Line~\ref{OC:3}, the only value it might decide on is $v$.
The worst that can happen is that all other $f$ nodes started with a different value, as illustrated in Figure~\ref{fig:benign}.
Yet, by the resiliency assumption, in this case any other node must have received at least one $v$ proposal.
Hence, all nodes would invoke the base consensus protocol with $v$ and by the validity and termination conditions of the consensus protocol would decide $v$.

In no node receives $n-f$ values of $v$, then it is clear that the standard consensus protocol would be invoked by each node with a value that was proposed by some node.
Hence, validity would be preserved as well.
The formal proof is given below.

\begin{lemma}
	\label{lemma:ben-termination}
	The protocol listed in Algorithm~\ref{alg:gen} satisfies termination as long as $f<n/2$.
\end{lemma}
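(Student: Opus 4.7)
The plan is to show that every correct node eventually decides, where a decision occurs either at Line~\ref{OC:5} or upon return from \texttt{Consensus}.Propose. First I would verify that the \texttt{wait until} on Line~\ref{OC:3} always completes at every correct node: since $f<n/2$ benign (crash) failures leave at least $n-f$ correct nodes that reach Line~\ref{OC:2} and broadcast their proposals, channel reliability ensures each correct node eventually collects $n-f$ distinct proposals and hence reaches one of Lines~\ref{OC:5},~\ref{OC:7}, or~\ref{OC:9}.

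I would then split on whether any correct node invokes the base \texttt{Consensus} subroutine. If no correct node ever calls \texttt{Consensus}.Propose from Line~\ref{OC:7} or Line~\ref{OC:9}, then every correct node must have satisfied the test $\mathit{votes}=\{v\}$ and decided $v$ at Line~\ref{OC:5}; since the benign model precludes faulty nodes from fabricating a \texttt{Consensus} invocation, the conditional on Line~\ref{OC:11} never fires and every correct node simply terminates. Otherwise, at least one correct node calls \texttt{Consensus}.Propose. By the rule on Line~\ref{OC:11}, every correct node that had previously decided at Line~\ref{OC:5} also invokes \texttt{Consensus}.Propose$(v)$. Thus every correct node eventually enters the base protocol, whose termination property --- assumed to hold whenever it is invoked, per Section~\ref{sec:prelim} --- yields a return value at each correct node, completing the argument.

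The main obstacle I anticipate is the operational reading of Line~\ref{OC:11}: a correct node that already returned from \texttt{OptimizedConsensus} with a decision must nevertheless be able to detect that some other process has entered \texttt{Consensus} and belatedly join it. I would formalize this, consistently with the ``executed at most once'' annotation in the pseudocode, as a background handler that monitors the base protocol's messages and injects the rescue call \texttt{Consensus}.Propose$(v)$ on behalf of the already-decided node. Once this is granted, the resiliency assumption $f<n/2$ is used only to ensure the wait on Line~\ref{OC:3} completes and that some correct node is available to trigger the rescue; the remainder is a direct appeal to the termination of the underlying consensus.
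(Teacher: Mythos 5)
Your proposal is correct and follows essentially the same route as the paper's proof: show the wait at Line~\ref{OC:3} completes under $f<n/2$, case-split on whether any node invokes the base protocol, use Line~\ref{OC:11} to guarantee full participation in that case, and conclude by the base protocol's termination property. You merely spell out two details the paper leaves implicit (why the wait unblocks, and the operational reading of Line~\ref{OC:11} as a background rescue handler), which is a faithful elaboration rather than a different argument.
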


\begin{proof}
	As can be seen from the code, as long as $f<n/2$, the protocol either decides and returns in Line~\ref{OC:5} or invokes the base consensus protocol in Lines~\ref{OC:7} or~\ref{OC:9}.
	If at least one node invokes the base consensus protocol, then by Line~\ref{OC:11} all nodes invoke this protocol (once).
	Hence, by the termination of the base consensus protocol, Algorithm~\ref{alg:gen} also terminates.
\end{proof}

\begin{lemma}
	\label{lemma:ben-validity}
	The protocol listed in Algorithm~\ref{alg:gen} satisfies validity.
\end{lemma}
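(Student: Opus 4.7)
My plan is to split on where a correct node's decision originates in Algorithm~\ref{alg:gen}. A correct node either decides directly in the optimistic phase at Line~\ref{OC:5}, or it returns the value output by the underlying \texttt{Consensus} invoked in Line~\ref{OC:7} or Line~\ref{OC:9}. I would handle these two cases separately, reducing the consensus-based case to the validity of the base protocol.

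For the optimistic case, if some correct $p$ decides $v$ at Line~\ref{OC:5} then $p$ has collected $n-f$ proposals, all equal to $v$, from distinct nodes. By the integrity assumption on the network each such vote was genuinely sent by its claimed sender, and since the benign model only allows crash failures, the sender actually proposed $v$ as its input (a crashed node does not misrepresent its input, it merely halts). Hence $v$ was proposed by some process, as required.

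For the consensus-based case I would show that every value a correct node feeds into \texttt{Consensus}.Propose was proposed by some process in the enclosing OptimizedConsensus call, and then invoke the validity of \texttt{Consensus}. In Line~\ref{OC:9} the submitted value is the caller's own input $v'$ and is trivially proposed. In Line~\ref{OC:7} the \texttt{AdoptionCriteria} for the benign instantiation, as fixed in Section~\ref{sec:benign}, requires that the received votes contain at least one instance of $v$; by the same integrity argument as above this instance was proposed by its sender. The auxiliary call at Line~\ref{OC:11} fires only after a decision of $v$ at Line~\ref{OC:5}, so it is subsumed by the optimistic case. Combining, any value returned by the base \texttt{Consensus} is, by its validity, a value proposed by some process in the original OptimizedConsensus call.

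The only mildly delicate point is the interpretation of ``proposed by some process'' when the proposing node is faulty; in the benign model a crashed process still has a well-defined initial proposal, and by integrity cannot be credited with a value it never sent, so the case analysis goes through uniformly over correct and faulty senders. There is no real obstacle beyond this bookkeeping: validity reduces cleanly to network integrity together with the validity of the base consensus protocol.
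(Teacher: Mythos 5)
Your proof is correct and follows essentially the same route as the paper's: show that the value decided at Line~\ref{OC:5} and every value passed to \texttt{Consensus}.Propose at Lines~\ref{OC:7} and~\ref{OC:9} was proposed by some process, then invoke the validity of the base protocol. Your treatment is just slightly more explicit about network integrity and the auxiliary call at Line~\ref{OC:11}, which the paper leaves implicit.
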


\begin{proof}
	Clearly from the code, the value decided in Line~\ref{OC:5} is proposed by some (in fact by multiple) node(s).
	Similarly, if a value is adopted by the \texttt{AdoptionCriteria}, it is also proposed by some node.
	Hence, all invocations of the base consensus protocol in Lines~\ref{OC:7} or~\ref{OC:9} are with values proposed by some node.
	By the validity of the base consensus protocol, overall validity is satisfied.
\end{proof}

\begin{lemma}
	\label{lemma:ben-agreement}
	The protocol listed in Algorithm~\ref{alg:gen} satisfies agreement.
\end{lemma}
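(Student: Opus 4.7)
The plan is to split into two cases according to whether any correct node takes the fast path (Line~\ref{OC:5}).

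In the first case, suppose no correct node decides in Line~\ref{OC:5}. Then every correct node must invoke the base consensus protocol, either in Line~\ref{OC:7} or in Line~\ref{OC:9}. Agreement then follows immediately from the agreement property of the base consensus protocol.

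In the second case, suppose some correct node $p$ decides $v$ in Line~\ref{OC:5}. Then $p$ collected $n-f$ proposals, all equal to $v$, coming from a set $S_p$ of $n-f$ distinct senders. The core step is a quorum intersection argument: for any other correct node $q$ that proceeds past Line~\ref{OC:3}, $q$ has collected proposals from a set $S_q$ of $n-f$ distinct senders, and since $f<n/2$ we have $|S_p \cap S_q| \geq 2(n-f) - n = n - 2f \geq 1$. Each sender in this intersection sent the same proposal to both $p$ and $q$ (integrity plus the fact that a benign node broadcasts the same value to everyone), so $q$'s received $\mathit{votes}$ contain at least one copy of $v$. Consequently, $q$ either itself decides $v$ in Line~\ref{OC:5}, or its $\mathit{votes}$ satisfy \texttt{AdoptionCriteria} and $q$ invokes \texttt{Consensus}.Propose($v$) in Line~\ref{OC:7}; crucially, $q$ can never reach Line~\ref{OC:9} and propose a different value into the base consensus.

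To close the case, I use Line~\ref{OC:11}: if any node invokes the base consensus, then $p$ (who took the fast path) also invokes \texttt{Consensus}.Propose($v$), so every invocation of the base protocol is with value $v$. By validity of the base consensus, any node deciding through it decides $v$, matching $p$'s fast-path decision and establishing agreement. The main obstacle is precisely this coordination between the two paths: one must ensure (i) that the quorum intersection forces every other node onto the $v$-branch of the \texttt{AdoptionCriteria}, and (ii) that fast-deciders still feed $v$ into the base protocol via Line~\ref{OC:11} so that the base protocol's termination and validity are compatible with what was already decided optimistically.
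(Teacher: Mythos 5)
Your proof is correct and follows essentially the same route as the paper's: the key step in both is the counting/quorum-intersection observation that since $p$ saw $n-f$ votes for $v$ and $f<n/2$, any other node's $n-f$ received votes must include at least $n-2f\geq 1$ copy of $v$, forcing every base-consensus invocation to carry $v$ and letting the base protocol's validity close the argument. Your explicit appeal to Line~\ref{OC:11} to ensure fast-deciders also feed $v$ into the base protocol is a welcome detail the paper leaves implicit.
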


\begin{proof}
	Since there is only one preferred value $v$, if all nodes decide $v$ in Line~\ref{OC:5} then they all decide on the same value ($v$).
	Similarly, if all nodes decide by invoking the base consensus protocol in Lines~\ref{OC:7} or~\ref{OC:9}, then by the agreement property of this protocol all nodes decide on the same value.
	Hence, the only potential violation of agreement is if some nodes decide in Line~\ref{OC:5} while others invoke the base consensus protocol in Lines~\ref{OC:7} or~\ref{OC:9}.
	Suppose node $p$ decides in Line~\ref{OC:5}.
	In this case, it has received at least $n-f$ votes for $v$.
	Suppose all missing votes are $u$ and some other node $q$ has received these $u$ votes.
	However, as $f<n/2$, $q$ must receive at least $1$ vote for $v$ and therefore would adopt $v$ by the \texttt{AdoptionCriteria}\footnote{
		Notice that if in some execution the number of faulty nodes surpasses the assumed upper bound $f$, then the protocol would be stuck in Line~\ref{OC:3} and agreement would be trivially preserved.}.
	As this applies to any $q$, all nodes that invoke the base consensus protocol in Lines~\ref{OC:7} or~\ref{OC:9} invoke it with $v$.
	Hence, by the validity property of this protocol only $v$ can be decided so overall agreement is also satisfied.
\end{proof}

From Lemmas~\ref{lemma:ben-termination},~\ref{lemma:ben-validity} and~\ref{lemma:ben-agreement}, we immediately have the following theorem:
\begin{theorem}
	\label{thm:benign-correct}
	The protocol listed in Algorithm~\ref{alg:gen} solves consensus when up to $f<n/2$ nodes may fail by crashing.
\end{theorem}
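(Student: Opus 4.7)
The plan is to observe that the consensus problem in the benign failure model is defined precisely by the three properties Termination, Validity, and Agreement, and that each of these has already been established as a separate lemma under the standing assumption $f<n/2$: Lemma~\ref{lemma:ben-termination}, Lemma~\ref{lemma:ben-validity}, and Lemma~\ref{lemma:ben-agreement}. So the proof is a one-line assembly: invoke the three lemmas in sequence and conclude that Algorithm~\ref{alg:gen} solves consensus whenever at most $f<n/2$ nodes may crash.

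If I were to build the result without appealing to the preceding lemmas, I would organize the argument along the same decomposition. For Termination, I would note that every correct node either decides directly in Line~\ref{OC:5} or reaches Line~\ref{OC:7} or~\ref{OC:9} and delegates to the base consensus; the only subtlety is that a node which decided on the fast path must still participate in the base protocol if another node enters it, which is exactly the role of Line~\ref{OC:11}. Validity is essentially syntactic: the fast-path decision is a unanimously received $v$, and both branches of the delegation hand a proposed value (either $v$, which was received as a vote, or the local proposal $v'$) into the base consensus, whose validity then carries through.

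The one place where the resiliency bound $f<n/2$ is actually consumed, and therefore the main obstacle, is Agreement, specifically the mixed scenario in which one node $p$ decides on the fast path while another node $q$ falls through to the base consensus. I would handle this by a pigeonhole argument: if $p$ collected $n-f$ votes all equal to $v$, then any other node's $n-f$-vote set can miss at most $f$ of $p$'s witnesses, so it contains at least $(n-f)-f>0$ copies of $v$ because $n>2f$. Hence the \texttt{AdoptionCriteria} fires at $q$, so $q$ enters the base consensus with $v$; since every node that enters does so with $v$, validity of the base protocol forces the common decision to be $v$, matching $p$. This is exactly the reasoning already encapsulated in Lemma~\ref{lemma:ben-agreement}, so combining the three lemmas gives the theorem.
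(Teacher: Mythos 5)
Your proposal is correct and matches the paper exactly: the theorem is obtained by directly combining Lemmas~\ref{lemma:ben-termination}, \ref{lemma:ben-validity} and~\ref{lemma:ben-agreement}, and your sketches of the three lemmas (including the pigeonhole count $(n-f)-f>0$ for the mixed agreement case) mirror the paper's own arguments.
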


\subsection{Byzantine Failures with Classical Validity}
\label{sec:byzantine-classic}

For the Byzantine failure model with classical validity, we provide a resiliency of $1/4-\epsilon$, i..e, $f<n/4$.
Here, the \texttt{AdoptionCriteria} is simply to check whether the collection of received \textit{votes} includes at least $f+1$ instances of the preferred value $v$.

\begin{figure}[t]
	\center{
		\includegraphics[scale=0.3]{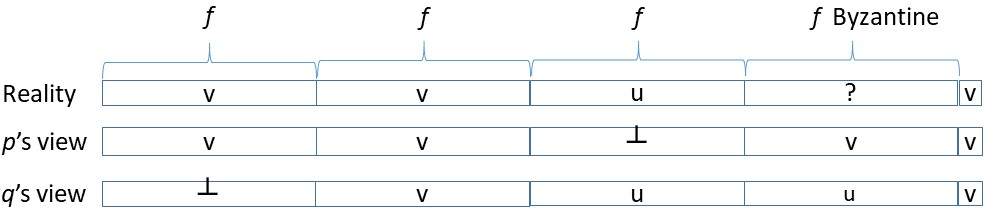}
	}
	\caption{A worst case scenario for the Byzantine failure model with classic validity and $f<n/4$. If some node $p$ decides after receiving $n-f$ votes for $v$, then any other node $q$ must receive at least $f+1$ votes for $v$.}
	\label{fig:byzclassic}
\end{figure}

Intuitively, if at least $f+1$ nodes broadcast $v$, then we know that at least one correct node proposed $v$ and therefore $v$ can be decided on in terms of validity.
On the other hand, if some node decided due to receiving $n-f$ values $v$, then it is possible that the other $f$ were simply late but all of them proposed value $u$.
In this case, another node might have received these $u$ votes first and might have received $u$ from all Byzantine nodes, collecting in total $2f$ votes for $u$.
This is illustrated in Figure~\ref{fig:byzclassic}.
Yet, as we assumed that $f<n/4$, such a node must receive at least $f+1$ votes for $v$ and would therefore adopt $v$ by the \texttt{AdoptionCriteria} and start the base consensus protocol with $v$.

In conclusion, all correct nodes would either decide $v$ in the optimizer phase, or would propose $v$ in the standard consensus algorithm.
In the latter case, due to its validity property, the decision value must also be $v$.
The complete formal proof appears below.

\begin{lemma}
	\label{lemma:byz-termination}
	The protocol listed in Algorithm~\ref{alg:gen} satisfies Byzantine termination as long as $f<n/4$.
\end{lemma}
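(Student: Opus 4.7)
The plan is to show termination by a case analysis on whether any correct node ever invokes the base consensus protocol, and then to rely on the termination property of the base protocol itself (which is assumed to work for $f<n/4$). First I would argue that every correct node eventually passes the waiting condition in Line~\ref{OC:3}: since there are at least $n-f$ correct nodes, and each correct node broadcasts its proposal in Line~\ref{OC:2}, every correct node eventually receives $n-f$ proposals (the links between correct nodes eventually deliver). Hence every correct node reaches one of Lines~\ref{OC:5}, \ref{OC:7}, or~\ref{OC:9} in the optimizer phase.

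Next I would split into two cases. In the first case, every correct node reaches Line~\ref{OC:5} and decides $v$; then termination is immediate and the invocation guarded by Line~\ref{OC:11} is never triggered because no correct node ever invokes \texttt{Consensus}.Propose (and Byzantine nodes acting alone cannot make a correct node satisfy the guard in a meaningful way, since the guard requires \emph{some} node to have actually invoked the base protocol; I would note that in the worst case even if a Byzantine node tries to spuriously trigger Line~\ref{OC:11}, the resulting extra invocation by a decided node is harmless for termination). In the second case, at least one correct node invokes the base consensus protocol in Line~\ref{OC:7} or~\ref{OC:9}. Then the guard of Line~\ref{OC:11} eventually becomes true at every correct node that decided in the optimistic phase, so every such node also invokes \texttt{Consensus}.Propose($v$). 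Consequently, all $n-f$ correct nodes participate in the base consensus protocol.

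Having ensured universal participation, I would then appeal to the termination property of the base consensus protocol, which by assumption tolerates the prevailing failure bound ($f<n/4$ in this section). This delivers a decision at every correct node, completing termination.

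The main obstacle, I expect, is the subtle interaction between Lines~\ref{OC:5} and~\ref{OC:11}: one must justify that a correct node which has already decided and returned can still be made to invoke the base consensus protocol, and that doing so does not compromise its decision. I would address this by reading the pseudocode as specifying two concurrent activities at a node (the \textsc{OptimizedConsensus} procedure and a background monitor that fires at most once when the Line~\ref{OC:11} guard is satisfied), and by emphasizing that the \emph{returned} value of a deciding node is fixed at Line~\ref{OC:5}; the later call in Line~\ref{OC:11} is purely to supply a proposal to the base protocol so that enough correct nodes participate, and it proposes exactly $v$, which is consistent with the already decided value. With this reading, participation is universal and termination follows from the base protocol's termination guarantee.
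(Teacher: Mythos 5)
Your proof is correct and follows essentially the same route as the paper's: the paper reduces this lemma to the benign-case termination argument, which likewise observes that every correct node either decides in Line~\ref{OC:5} or invokes the base protocol, that Line~\ref{OC:11} forces universal participation once any node invokes it, and that the base protocol's termination then finishes the job. Your additional remarks on the delivery of the first $n-f$ proposals and on the concurrent reading of Line~\ref{OC:11} merely make explicit what the paper leaves implicit.
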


\begin{proof}
	The proof is identical to the proof of Lemma~\ref{lemma:ben-termination}.
\end{proof}

\begin{lemma}
	\label{lemma:byz-validity}
	The protocol listed in Algorithm~\ref{alg:gen} satisfies Byzantine validity.
\end{lemma}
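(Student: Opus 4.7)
The plan is to fix an arbitrary value $w$ and assume every correct node invokes \texttt{OptimizedConsensus} with input $w$; Byzantine validity then reduces to showing that the only value any correct node can decide is $w$. I would split on whether $w$ equals the biased value $v$ or not, and in each case track exactly which of the three exit points (Line~\ref{OC:5},~\ref{OC:7}, or~\ref{OC:9}) a correct node can take given the votes it could possibly have assembled at Line~\ref{OC:3}.

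For $w = v$: among the $n-f$ proposals received by any correct node, at most $f$ originate from Byzantine nodes and at least $n-2f$ come from correct ones, and all these correct votes equal $v$. Because $f < n/4$ we have $n - 2f > 2f \geq f+1$, so the vote multiset is either exactly $\{v\}$ (decide $v$ in Line~\ref{OC:5}) or it contains at least $f+1$ copies of $v$, meeting the \texttt{AdoptionCriteria} and sending the node to Line~\ref{OC:7} where it proposes $v$ to the base \texttt{Consensus}. Together with the Line~\ref{OC:11} clause, this makes every correct input to the base protocol equal to $v$, and the validity of the base protocol then forces $v$ as the unique decision.

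For $w \neq v$: now Byzantine nodes are the only possible source of $v$-votes, so any $n-f$ votes collected by a correct node contain at most $f$ copies of $v$ and at least $n-2f > 0$ copies of $w$. Hence the Line~\ref{OC:5} guard (vote multiset $=\{v\}$) cannot hold, and the \texttt{AdoptionCriteria} ($f+1$ copies of $v$) cannot be met either. Every correct node therefore reaches Line~\ref{OC:9} and invokes \texttt{Consensus}.Propose($w$), and validity of the base consensus again restricts the decision to $w$.

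The step I expect to be the only delicate one is the interaction in the $w = v$ case between fast-deciders and nodes that enter the base protocol: the base protocol's validity precondition demands a uniform input from all correct participants, so I have to check that the Line~\ref{OC:11} clause fires with value $v$ (which is forced, since a fast-decider could only have decided $v$) and that it fires at most once per node, as the comment in the code asserts. The counting arguments in both cases are essentially mechanical once the resiliency bound $f < n/4$ is substituted.
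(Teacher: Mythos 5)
Your proof is correct and takes essentially the same route as the paper's: both reduce Byzantine validity to the observation that, when all correct nodes propose the same value $w$, every correct node either fast-decides $w$ or feeds $w$ into the base protocol (via the counting bound $n-2f\geq f+1$ guaranteeing the \texttt{AdoptionCriteria} fires exactly when it should), after which the base protocol's own Byzantine validity closes the argument. Your explicit case split on $w=v$ versus $w\neq v$ and the check of the Line~\ref{OC:11} clause merely spell out what the paper's terser proof leaves implicit.
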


\begin{proof}
	Clearly from the code, the value decided by a correct node in Line~\ref{OC:5} must be proposed by at least one correct node given that $f<n/4$ and that in Line~\ref{OC:3} the node collects $n-f$ votes.
	In particular, if all correct nodes proposed the same value, this is the only value that can be decided at this stage.
	Similarly, if a value is adopted by the \texttt{AdoptionCriteria}, it is also proposed by some correct node.
	Hence, all invocations by correct nodes of the base consensus protocol in Lines~\ref{OC:7} or~\ref{OC:9} are with values proposed by some correct node.
	By the validity of the base consensus protocol, overall validity is satisfied.
\end{proof}

\begin{lemma}
	\label{lemma:byz-agreement}
	The protocol listed in Algorithm~\ref{alg:gen} satisfies Byzantine agreement when $f<n/4$.
\end{lemma}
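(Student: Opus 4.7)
The plan is to mirror the structure of the benign agreement proof (Lemma \ref{lemma:ben-agreement}), partitioning executions according to whether correct nodes decide in Line~\ref{OC:5} or invoke the base consensus in Lines~\ref{OC:7}/\ref{OC:9}. The two homogeneous cases (all correct nodes decide $v$ optimistically, or all correct nodes fall through to the base protocol) are straightforward: the first yields agreement on $v$ directly, and the second reduces to the agreement property of the base consensus protocol. The substantive case is the mixed one, in which some correct node $p$ decides $v$ in Line~\ref{OC:5} while some other correct node $q$ proceeds to invoke the base consensus. I have to show that every such $q$ invokes the base consensus with value $v$; then the classical Byzantine validity of the base protocol forces the overall decision to be $v$, and Line~\ref{OC:11} takes care of $p$'s own contribution to liveness of the base invocation.

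The quantitative heart of the argument is a counting step exploiting $f<n/4$. Since $p$ decided in Line~\ref{OC:5}, it received $n-f$ votes, all equal to $v$; at most $f$ of these come from Byzantine nodes, so at least $n-2f$ \emph{correct} nodes broadcast $v$. Now consider $q$'s vote set of size $n-f$: at most $f$ of the $n-f$ correct senders of $v$ can be missing from $q$'s view (because $q$ is waiting out at most $f$ nodes), so $q$ receives votes for $v$ from at least $(n-2f)-f = n-3f$ correct nodes. The assumption $f<n/4$ gives $n-3f \geq f+1$, which is exactly the threshold used by \texttt{AdoptionCriteria} for the classical-validity instantiation. Hence $q$ adopts $v$ and invokes \texttt{Consensus}.Propose($v$).

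With every correct invocation of the base protocol using $v$, classical Byzantine validity of that protocol ensures that only $v$ can be decided there, matching $p$'s optimistic decision and yielding agreement. Combined with Lemma~\ref{lemma:byz-termination} forcing $p$ to also execute Line~\ref{OC:11} if anyone else invoked the base protocol, this closes the mixed case. I would also add a footnote analogous to the benign proof pointing out that if the adversary exceeds $f$ failures the protocol simply blocks in Line~\ref{OC:3} and agreement is vacuous.

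The main obstacle is being precise about the counting in the mixed case — specifically, justifying that at most $f$ of the correct $v$-senders can be absent from $q$'s received set, and that any Byzantine votes in $q$'s set can only help (or at worst not hinder) crossing the $f+1$ threshold. The inequality $n-3f \geq f+1$ is the exact place where the resiliency bound $f<n/4$ is used, and it is essential that the \texttt{AdoptionCriteria} threshold be $f+1$ (and not, say, $2f+1$) for this argument to go through.
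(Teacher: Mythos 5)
Your proposal is correct and follows essentially the same route as the paper: the same three-way case split, the same counting in the mixed case showing that any correct $q$ that falls through to the base protocol still sees at least $n-3f \geq f+1$ votes for $v$ (your version just makes the arithmetic more explicit than the paper's informal ``worst that can happen'' phrasing), and the same appeal to classical validity of the base protocol to force the decision to $v$. The only blemish is a typo where you write ``at most $f$ of the $n-f$ correct senders of $v$'' when you mean the $n-2f$ correct senders; your subsequent computation $(n-2f)-f=n-3f$ is the intended and correct one.
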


\begin{proof}
	Since there is only one preferred value $v$, if all correct nodes decide $v$ in Line~\ref{OC:5} then they all decide on the same value ($v$).
	Similarly, if all correct nodes decide by invoking the base consensus protocol in Lines~\ref{OC:7} or~\ref{OC:9}, then by the agreement property of this protocol all correct nodes decide on the same value.
	Hence, the only potential violation of agreement is if some correct nodes decide in Line~\ref{OC:5} while others invoke the base consensus protocol in Lines~\ref{OC:7} or~\ref{OC:9}.
	Suppose correct node $p$ decides in Line~\ref{OC:5}.
	In this case, it has received at least $n-f$ votes for $v$.
	Suppose all missing votes are $u$ and some other node $q$ has received these $u$ votes.
	Further, the worst that can happen is that among the $n-f$ votes for $v$ that $p$ has received, $f$ were sent by Byzantine nodes who send the value $u$ to $q$.
	Putting it all together, as $f<n/4$, node $q$ must receive at least $f+1$ votes for $v$ and therefore would adopt $v$ by the \texttt{AdoptionCriteria}.
	As this applies to any $q$, all correct nodes that invoke the base consensus protocol in Lines~\ref{OC:7} or~\ref{OC:9} invoke it with $v$.
	Hence, by the validity property of this protocol only $v$ can be decided so overall agreement is also satisfied.
\end{proof}

From Lemmas~\ref{lemma:byz-termination},~\ref{lemma:byz-validity} and~\ref{lemma:byz-agreement}, we immediately have the following theorem:
\begin{theorem}
	\label{thm:byzantine-correct}
	The protocol listed in Algorithm~\ref{alg:gen} solves Byzantine consensus with classic validity when up to $f<n/4$ nodes may incur Byzantine failures.
\end{theorem}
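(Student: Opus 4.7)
The plan is to obtain the theorem as a direct corollary of the three preceding lemmas. Byzantine consensus with classical validity is, by definition, the conjunction of Byzantine Validity, Byzantine Agreement, and Byzantine Termination. Lemma~\ref{lemma:byz-validity} yields the first, Lemma~\ref{lemma:byz-agreement} the second under $f<n/4$, and Lemma~\ref{lemma:byz-termination} the third under the same bound. Since all three lemmas concern the very same Algorithm~\ref{alg:gen} and their resiliency hypotheses coincide with that of the theorem, the proof is essentially a single sentence invoking the three lemmas; no further case analysis or auxiliary construction is needed.

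If one were to reproduce the argument from scratch rather than cite the lemmas, I would split along the same three properties and handle them in increasing order of difficulty. Termination is structural: either a correct node decides at Line~\ref{OC:5}, or it enters the base consensus at Line~\ref{OC:7} or~\ref{OC:9}; the Line~\ref{OC:11} clause ensures that any correct fast-path decider also feeds its decided value into the base protocol whenever some node triggers it, so the base consensus always gets full correct participation and its termination property carries over. Validity is equally structural: any value decided at Line~\ref{OC:5} has $n-f>3f$ witnesses and hence at least one correct proposer; any value fed into the base consensus either equals the caller's own proposal or is supported by $f+1>f$ \texttt{AdoptionCriteria} witnesses and so again has at least one correct proposer; the validity of the base protocol then closes the argument.

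The genuine obstacle, and the only place where the bound $f<n/4$ is tight, lies in agreement. The dangerous execution is a hybrid in which a correct $p$ fast-decides $v$ at Line~\ref{OC:5} while some other correct $q$ enters the base consensus with a different proposal. I would rule this out by a quorum-intersection argument: since $p$ collected $n-f$ votes for $v$, at least $n-2f$ of them were cast by correct nodes who will deliver $v$ to $q$ as well. In $q$'s window of $n-f$ received votes, the adversary may at worst withhold $f$ correct $v$-votes from $q$'s view and have its $f$ Byzantine nodes equivocate by sending $u$ to $q$, still leaving at least $n-3f$ copies of $v$ visible to $q$. Requiring $n-3f\geq f+1$ is exactly $f<n/4$, so the \texttt{AdoptionCriteria} fires at $q$ with value $v$, every correct entrant to the base consensus proposes $v$, and base-protocol validity forces the overall decision to be $v$, matching $p$'s fast-path decision.
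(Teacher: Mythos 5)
Your proposal is correct and follows the paper's own route exactly: the theorem is stated there as an immediate corollary of Lemmas~\ref{lemma:byz-termination},~\ref{lemma:byz-validity} and~\ref{lemma:byz-agreement}, and your from-scratch sketches of the three properties (structural termination via Line~\ref{OC:11}, validity via the $f+1$-witness adoption criterion, and the $n-3f\geq f+1$ counting argument for the hybrid agreement case) match the paper's lemma proofs step for step.
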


\subsection{Byzantine Failures with External Validity}
\label{sec:byzantine-external}

We split the discussion of external validity into two parts.
First, we instantiate Algorithm~\ref{alg:gen} for this failure model in a manner that is oblivious to the proof that each value may carry, which might affect how the external validity function may decide if a value is valid.
This is done in Section~\ref{sec:oblivious}.
Next, in Section~\ref{sec:aware} we present another optimization that can reduce the amount of data being communicated in the optimistic case by separating between values and their proofs.

\subsubsection{Proof Oblivious Protocol}
\label{sec:oblivious}
Here, due to the use of an external validy function, we can raise the resiliency level to $f<n/3$.
This is because if the preferred value $v$ is valid, we can decide on it even if it was proposed by any node, correct or Byzantine.
We make the following three assumptions:
\begin{description}
	\item[Assumption 1] Any value $u$ proposed by a correct node is valid (both $u=v$ and $u\neq v$ are possible).
	\item[Assumption 2] The base consensus protocol satisfies classical validity (does not have to satisfy external validity).
	\item[Assumption 3] Either ($i$) there are only two possible proposed and decision values (known also as \emph{binary consensus}), or ($ii$) the base consensus protocol also satisfies external validity.
\end{description}
With these assumptions, the \texttt{AdoptionCriteria} becomes adopting $v$ if $v \in \mathit{votes}$ and $v$ is valid.

\begin{figure}[t]
	\center{
		\includegraphics[scale=0.38]{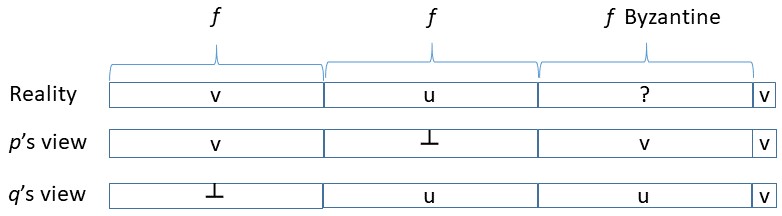}
	}
	\caption{A worst case scenario for the Byzantine failure model with external validity and $f<n/3$. If some node $p$ decides after receiving $n-f$ votes for $v$, then any other node $q$ must receive at least one vote for $v$.}
\label{fig:byzexternal}
\end{figure}

As shown in Figure~\ref{fig:byzexternal}, with $n=3f+1$ it is possible that some correct node receives $n-f$ instances of the preferred value $v$ and some other correct process only receives a single vote for $v$.
However, in such a case each correct process that does not decide in the optimizer phase must adopt $v$ and propose it to the standard consensus protocol.
Hence, the consensus protocol must decide $v$.
The complete formal proof follows.

\begin{lemma}
	\label{lemma:ext-termination}
	The protocol listed in Algorithm~\ref{alg:gen} satisfies Byzantine termination as long as $f<n/3$.
\end{lemma}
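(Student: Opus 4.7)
The plan is to mimic the structure of Lemma~\ref{lemma:ben-termination} and Lemma~\ref{lemma:byz-termination}, but first justifying that the wait in Line~\ref{OC:3} does not block any correct node. Since $f<n/3$, there are at least $n-f>2n/3$ correct nodes, so the pairwise message delivery assumption (between correct nodes) in the optimistic phase guarantees that every correct node eventually collects $n-f$ distinct proposals. Hence no correct node is stuck at Line~\ref{OC:3}.

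Once past the wait, the code path of a correct node leads to exactly one of three outcomes: deciding $v$ in Line~\ref{OC:5}, invoking \texttt{Consensus}.Propose in Line~\ref{OC:7}, or invoking \texttt{Consensus}.Propose in Line~\ref{OC:9}. In the first case termination for that node is immediate. In the latter two cases termination is inherited from the Byzantine termination property of the base consensus protocol.

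I would then handle the cross-case: the only remaining risk is that some correct nodes decide early while others enter the base consensus protocol with too few participants to terminate. This is precisely what the post-procedure block at Line~\ref{OC:11} is designed for: any correct node that decided in Line~\ref{OC:5} and later observes that some node has initiated \texttt{Consensus}.Propose must itself invoke \texttt{Consensus}.Propose (exactly once, with value $v$). Consequently, if even one correct node invokes the base protocol, then every correct node invokes it; by the termination property of the base protocol, all of them eventually decide.

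The main obstacle, as I see it, is being careful about when Line~\ref{OC:11} is actually triggered, i.e., explaining how a correct node that has decided learns that another node has entered the base protocol. This is a modeling point rather than a deep issue: it follows from the same reliable delivery of first-round messages that drives the whole optimistic phase, since invocation of \texttt{Consensus} is observable through the messages of that base protocol being disseminated. Given this, the proof essentially reduces to the two-line argument of Lemma~\ref{lemma:ben-termination} and can in fact be stated as ``The proof is identical to that of Lemma~\ref{lemma:ben-termination}, using $f<n/3$ to guarantee that Line~\ref{OC:3} unblocks.''
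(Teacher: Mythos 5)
Your proposal is correct and follows essentially the same route as the paper, which simply states that the proof is identical to that of Lemma~\ref{lemma:ben-termination}: after the wait in Line~\ref{OC:3} unblocks, each correct node either decides in Line~\ref{OC:5} or enters the base protocol, and Line~\ref{OC:11} ensures full participation so that the base protocol's termination carries over. Your added care about why Line~\ref{OC:3} cannot block and how Line~\ref{OC:11} is triggered is a reasonable elaboration of points the paper leaves implicit, not a different argument.
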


\begin{proof}
	The proof is identical to the proof of Lemma~\ref{lemma:ben-termination}.
\end{proof}

\begin{lemma}
	\label{lemma:ext-validity}
	The protocol listed in Algorithm~\ref{alg:gen} satisfies Byzantine external validity.
\end{lemma}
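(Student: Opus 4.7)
The plan is to case-split on where a correct node $p$ actually decides: either in Line~\ref{OC:5} of the optimistic phase with value $v$, or via the base consensus protocol invoked in Lines~\ref{OC:7}/\ref{OC:9}. I need to argue that in each branch the decided value is valid in the sense of the external \texttt{valid}($\cdot$) predicate.

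First, suppose $p$ decides in Line~\ref{OC:5}. Then $p$ has collected $n-f$ votes, all equal to $v$. Since $f<n/3$ we have $n-f>2f$, so strictly more than $f$ of those votes were sent by correct nodes; in particular, at least one correct node proposed $v$ to \textsc{OptimizedConsensus}. By Assumption~1, any value proposed by a correct node is valid, so $\texttt{valid}(v)=\textsc{true}$, and $p$'s decision is valid.

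Next, suppose $p$ decides through the base consensus protocol. I would first show that every correct node which invokes \texttt{Consensus}.Propose does so with a valid value: the adoption criterion is stated as ``$v\in\mathit{votes}$ and $v$ is valid'', so any correct node taking Line~\ref{OC:7} proposes the valid biased value $v$; any correct node taking Line~\ref{OC:9} proposes its own input $v'$, which is valid by Assumption~1; and the re-invocation in Line~\ref{OC:11} is only with the value $v$ which we already argued is valid. Thus the multiset of values proposed to \texttt{Consensus} by correct nodes contains only valid values. Now invoke Assumption~3. In sub-case ($ii$), the base protocol itself satisfies external validity, so its decision is valid and we are done. In sub-case ($i$) (binary consensus), the only two possible proposal/decision values are exactly those which can be proposed by correct nodes (any ``value'' outside this two-element set is simply not a legal input to the base protocol), so by Assumption~1 both legal values are valid, and the base protocol's decision therefore satisfies \texttt{valid}.

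The main obstacle I expect is the binary-consensus sub-case of Assumption~3: classical Byzantine validity alone does not, in general, force the decision to lie in the set of values proposed by correct nodes, so external validity has to be derived from the structural fact that in the binary setting the entire decision space coincides with the set of values that correct nodes could have legitimately proposed (each of which is valid by Assumption~1). Making that step precise — rather than relying on it implicitly — is the delicate part; the Line~\ref{OC:5} counting argument and the ``correct nodes always propose valid values'' observation are both straightforward.
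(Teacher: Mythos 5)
Your proof follows essentially the same route as the paper's: show that a Line~\ref{OC:5} decision is on a value proposed by some correct node (hence valid by Assumption~1), show that every correct invocation of the base protocol in Lines~\ref{OC:7}/\ref{OC:9} carries a valid value, and then discharge the two parts of Assumption~3. The one place you diverge is the binary sub-case, part ($i$). The paper splits it further: if \emph{all} correct nodes invoke the base protocol with the same value, it uses Assumption~2 (classical validity) to pin the decision to that single valid value; only when correct nodes invoke it with \emph{both} values does it argue that both possible decisions are valid, and there Assumption~1 applies directly because each of the two values is actually proposed by some correct node in that execution. You instead assert that both elements of the binary domain are valid because each \emph{could} be proposed by a correct node. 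That is a slightly stronger reading of Assumption~1 than its literal per-execution statement (``any value proposed by a correct node is valid''): if in the execution at hand only one of the two values is ever proposed by a correct node, Assumption~1 does not by itself certify the other one. Your reading is defensible if one treats Assumption~1 as a constraint on the input domain, but the paper's extra case split via Assumption~2 avoids needing it; you flagged this as the delicate point yourself, and the paper's resolution is the cleaner one. Everything else --- the counting at Line~\ref{OC:5}, the validity of the adopted value via the \texttt{AdoptionCriteria}, the treatment of the re-invocation with $v$, and part ($ii$) of Assumption~3 --- matches the paper.
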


\begin{proof}
	Clearly from the code, the value decided by a correct node in Line~\ref{OC:5} must be proposed by at least one correct node given that $f<n/3$ and that in Line~\ref{OC:3} the node collects $n-f$ votes.
	By Assumption~1, $v$ is valid (w.r.t. external validity).
	Similarly, if the value $v$ is adopted by the \texttt{AdoptionCriteria}, it is also valid.
	Hence, all invocations by correct nodes of the base consensus protocol in Lines~\ref{OC:7} or~\ref{OC:9} are with valid values.
	
	We claim that from Assumption~3 (regarding the base consensus protocol), overall validity is also satisfied.
	Suppose part ($i$) of this assumption holds.
	Then if all correct nodes invoked the base consensus protocol with the same value (which we showed has to be valid), then only this value can be decided.
	Otherwise, some correct processes invoked consensus with one value and the others with another, both have to be valid, and these are the only two possible decision values. 
	Hence, only valid values can be decided on.
	
	In case part ($ii$) holds, then by the external validity of the base consensus protocol, overall validity is satisfied.
\end{proof}

\begin{lemma}
	\label{lemma:ext-agreement}
	The protocol listed in Algorithm~\ref{alg:gen} satisfies Byzantine agreement when $f<n/3$.
\end{lemma}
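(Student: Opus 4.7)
The plan is to mirror the three-case structure used in Lemma~\ref{lemma:byz-agreement}, but with the weaker \texttt{AdoptionCriteria} (the collection of received \textit{votes} contains the value $v$ and $v$ is valid) and the relaxed resiliency bound $f<n/3$. First I would handle the two easy cases: if every correct node decides in Line~\ref{OC:5}, then since the only possible decision there is the biased value $v$, they agree; and if every correct node proceeds to invoke the base consensus protocol in Lines~\ref{OC:7} or~\ref{OC:9}, then agreement follows directly from the agreement property of the base protocol. The interesting case is the mixed one, where some correct node $p$ decides $v$ in Line~\ref{OC:5} while another correct node $q$ invokes the base consensus protocol.

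For the mixed case I would argue in two steps. Step one: derive that $v$ is valid. Since $p$ collected $n-f$ votes all equal to $v$ and at most $f$ of those senders are Byzantine, at least $n-2f\geq 1$ correct nodes proposed $v$; by Assumption~1 of Section~\ref{sec:oblivious}, $v$ is then a valid value. Step two: show that $q$ must see at least one $v$ among its $n-f$ collected votes. The non-$v$ votes that can reach $q$ come only from (a) correct nodes that did not propose $v$, of which there are at most $f$ since at least $n-2f$ correct nodes did propose $v$, and (b) Byzantine nodes who may send a non-$v$ message to $q$, of which there are at most $f$. Hence $q$ receives at most $2f$ non-$v$ votes, and so at least $n-f-2f=n-3f>0$ of its votes are $v$. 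This is exactly the worst case depicted in Figure~\ref{fig:byzexternal}.

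Combining the two steps, $q$'s local view satisfies the \texttt{AdoptionCriteria}, so $q$ invokes the base consensus protocol with $v$. The same argument applies uniformly to every correct node that reaches Lines~\ref{OC:7} or~\ref{OC:9}, so all of them invoke the base protocol with $v$. By Assumption~2, the base protocol satisfies classical validity, so only $v$ can be decided; together with the nodes that already decided $v$ in Line~\ref{OC:5}, this yields overall agreement.

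The main obstacle is merely the bookkeeping of the worst-case vote distribution at $q$, which is the standard $f<n/3$ quorum-intersection argument rather than anything genuinely new. Notably, Assumption~3 is not needed in this proof at all; it is used only to secure external validity in Lemma~\ref{lemma:ext-validity}, whereas agreement rests solely on classical validity of the base protocol.
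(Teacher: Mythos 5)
Your proposal is correct and follows essentially the same route as the paper's proof: the same three-case decomposition, the same worst-case count showing $q$ sees at most $2f$ non-$v$ votes and hence at least $n-3f\geq 1$ votes for $v$, and the same appeal to classical validity (Assumption~2) of the base protocol. The only cosmetic difference is that you derive validity of $v$ from $p$'s view ($n-2f\geq 1$ correct proposers of $v$) rather than from the correct sender of a $v$-vote to $q$ as the paper does; both are sound, and your remark that Assumption~3 is not needed for agreement is accurate.
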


\begin{proof}
	As in the proof of Lemma~\ref{lemma:byz-agreement}, since there is only one preferred value $v$, if all correct nodes decide $v$ in Line~\ref{OC:5} then they all decide on the same value ($v$).
	Similarly, if all correct nodes decide by invoking the base consensus protocol in Lines~\ref{OC:7} or~\ref{OC:9}, then by the agreement property of this protocol all correct nodes decide on the same value.
	Hence, the only potential violation of agreement is if some correct nodes decide in Line~\ref{OC:5} while others invoke the base consensus protocol in Lines~\ref{OC:7} or~\ref{OC:9}.
	
	Suppose correct node $p$ decides in Line~\ref{OC:5}.
	In this case, it has received at least $n-f$ votes for $v$ meaning that at least $f+1$ correct nodes voted $v$.
	Suppose all missing votes are $u$ and some other node $q$ has received these $u$ votes.
	Further, the worst that can happen is that among the $n-f$ votes for $v$ that $p$ has received, $f$ were sent by Byzantine nodes who send the value $u$ to $q$.
	Putting it all together, as $f<n/3$, node $q$ must receive at least one vote for $v$ from a correct node, which by Assumption~1 means that $v$ is valid, and therefore $q$ would adopt $v$ by the \texttt{AdoptionCriteria}.
	As this applies to any $q$, all correct nodes that invoke the base consensus protocol in Lines~\ref{OC:7} or~\ref{OC:9} invoke it with $v$.
	Hence, by the validity property of this protocol (Assumption~2), only $v$ can be decided so overall agreement is also satisfied.
\end{proof}

From Lemmas~\ref{lemma:ext-termination},~\ref{lemma:ext-validity} and~\ref{lemma:ext-agreement}, we immediately have the following theorem:
\begin{theorem}
	\label{thm:external-correct}
	The protocol listed in Algorithm~\ref{alg:gen} solves Byzantine consensus with external validity when up to $f<n/3$ nodes may incur Byzantine failures.
\end{theorem}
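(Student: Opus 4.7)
My plan is to prove the theorem by the same three-step decomposition used for the previous failure models: establish termination, external validity, and agreement as three separate lemmas, then combine them. Since the Byzantine consensus with external validity specification consists precisely of these three properties (Byzantine External Validity, Byzantine Agreement, Byzantine Termination), once each is shown for Algorithm~\ref{alg:gen} under the assumed resiliency $f<n/3$ plus Assumptions~1--3, the theorem follows immediately as a corollary.

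For termination, I would argue as in the benign case of Lemma~\ref{lemma:ben-termination}: each correct node either returns in Line~\ref{OC:5} or invokes the base consensus protocol, and the Line~\ref{OC:11} ``lift'' guarantees that if any correct node feeds the base protocol, every correct node does. Hence termination reduces to the termination of the base consensus protocol. This step is essentially free given the structure of the code and does not exploit the $f<n/3$ bound beyond what is needed to collect $n-f$ proposals in Line~\ref{OC:3}.

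For external validity, my plan is to show that every value fed into the base consensus protocol by a correct node is valid. A value decided in Line~\ref{OC:5} is the preferred value $v$, and since $n-f$ proposals for $v$ under $f<n/3$ imply at least one correct proposer, Assumption~1 makes $v$ valid; any value adopted through \texttt{AdoptionCriteria} is valid by construction (the criterion explicitly tests validity); and the fallback path in Line~\ref{OC:9} uses the node's own input, which is valid by Assumption~1. The subtle point is that the base consensus protocol is only assumed to satisfy \emph{classical} validity (Assumption~2), not external validity, so I must invoke Assumption~3 to conclude that classical validity suffices: in the binary case~($i$), classical validity forces the decision to be one of two inputs, both valid; if instead the base protocol itself satisfies external validity~($ii$), overall validity is immediate.

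For agreement, I would mirror the argument of Lemma~\ref{lemma:byz-agreement} but rescale the counts to $n > 3f$. The only nontrivial scenario is when some correct $p$ decides $v$ in Line~\ref{OC:5} while some correct $q$ falls through to the base protocol. Here $p$ saw $n-f$ votes for $v$, so among those at least $f+1$ came from correct nodes; even if $q$ receives instead the $f$ missing votes plus the $f$ Byzantine nodes all voting $u\neq v$, $q$ still sees at least $(n-f)-f = n-2f \geq f+1 \geq 1$ vote for $v$ from a correct node, which by Assumption~1 is valid, triggering the \texttt{AdoptionCriteria} for $v$. Thus every correct node that falls through proposes $v$ to the base protocol, and by its classical validity (Assumption~2) only $v$ can be decided. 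The main obstacle, and the place where $f<n/3$ is tight, is exactly this counting argument; once it is in hand, the theorem follows by invoking the three lemmas in sequence, exactly parallel to how Theorems~\ref{thm:benign-correct} and~\ref{thm:byzantine-correct} are concluded.
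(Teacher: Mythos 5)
Your proposal follows essentially the same route as the paper: the theorem is obtained by combining three lemmas (termination argued exactly as in Lemma~\ref{lemma:ben-termination}, external validity via Assumptions~1--3 with the case split on Assumption~3, and agreement via the vote-counting argument), matching the paper's Lemmas~\ref{lemma:ext-termination}, \ref{lemma:ext-validity} and~\ref{lemma:ext-agreement}. One small arithmetic slip in your agreement step: after subtracting both the $f$ missing votes and the $f$ Byzantine votes for $u$, node $q$ is guaranteed only $(n-f)-2f = n-3f \geq 1$ votes for $v$ from correct nodes (not $n-2f \geq f+1$), but this is exactly the bound needed and your conclusion stands.
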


\subsubsection{Proof Aware Variant}
\label{sec:aware}
We now explore another communication optimization when the validity function depends on the cryptographic proof that is part of the value.
That is, the initial value of each node is composed of $v = \{\mathit{val},\mathit{proof}\}$.
As the proof is large and would rarely be used, in the first phase we can send just the value $v.\mathit{val}$; only if any node needs to check the \texttt{AdoptionCriteria} then nodes would exchange their proofs $v.\mathit{proof}$.
With the same Assumptions~1--3 and \texttt{AdoptionCriteria} as in Section~\ref{sec:oblivious}, the above described proof aware protocol is listed in Algorithm~\ref{OBBC:impl}.

%
\begin{algorithm}[t]
	\caption{Proof Aware Variant - code for $p$}
	\label{OBBC:impl}
	\footnotesize
	\begin{algorithmic}[1]
		\Procedure{ComAwareConsensus$_{v,\mathtt{Consensus},\mathtt{AdoptionCriteria}}$}{$v'$}
		\State $\mathit{broadcast}(v'.\mathit{val})$ \label{OBBC:2}
		\State \textbf{wait until} $n-f$ proposals $\mathit{\widehat{val}}$ have been received \label{OBBC:3}
		\State $\mathit{votes}=$ \{received proposals\}
		\If {$\mathit{votes}=\{v.\mathit{val}\}$} 
		\State Decide $v.\mathit{val}$ \label{OBBC:5}
		\State Return $v.\mathit{val}$
		\EndIf 
		\State $\mathit{fullvals}=\{v'\}$ \Comment{couldn't terminate quickly; full protocol}
		\State $\mathit{broadcast}(v')$ \label{OBBC:7} \Comment{exchange also the proofs}
		\State \textbf{wait until} $|\mathit{fullvals}| = n-f$ \label{OBBC:8} \Comment{see also line~\ref{OBBC:14}}
		\If {$\mathit{fullvals}$ matches \texttt{AdoptionCriteria}} \label{OBBC:8}
		\State Return \texttt{Consensus}.Propose($v$).$\mathit{val}$ \label{OBBC:9}
		\Else
		\State Return \texttt{Consensus}.Propose($v'$).$\mathit{val}$ \label{OBBC:11}
		\EndIf 
		\EndProcedure
		\\
		\textbf{upon} receiving $(\hat{v})$ from $q$ \textbf{do}: \label{OBBC:14} 
		\State \indent $\mathit{fullvals} = \mathit{fullvals} \cup \{\hat{v}\}$
		\State {\indent send($v'$)} to $q$ \label{OBBC:15}   
		\\
		\If {invoked Decide $v.\mathit{val}$ and some node invoked \texttt{Consensus}.Propose($\hat{v}$)}
		\State \texttt{Consensus}.Propose($v$) \Comment{this is executed at most once} \label{OBBC:11}
		\EndIf
	\end{algorithmic}
\end{algorithm}

As can be seen, the only difference between Algorithm~\ref{alg:gen} and Algorithm~\ref{OBBC:impl} is that nodes initially only exchange the value part of their proposal.
Only if it is not enough to decide, then they also exchange the proofs and invoke the base consensus protocol accordingly.
The proof is essentially the same as in Section~\ref{sec:oblivious}.

\section{Lower Bound}
\label{sec:lower}

In this section, we show that for classical validity, even when there is a preferred value $v$, there is no asynchronous Byzantine consensus protocol that always terminates in a single round when all nodes start with the preferred value $v$ and there are no Byzantine failures.

\begin{theorem}
	\label{thm:lower}
	When $n=3f+1$, there does not exist an asynchronous Byzantine consensus protocol that always terminate in a single communication round for nodes that receive messages from only $n-f$ nodes (or fewer) in that round and no Byzantine failures manifest even when there is a preferred value.
\end{theorem}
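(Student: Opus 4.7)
The plan is a three-execution indistinguishability argument, tuned so that a Byzantine budget of exactly $f$ suffices in the two auxiliary ``forcing'' executions while still producing disagreement in a Byzantine-free target execution. First I would assume for contradiction that such a deterministic protocol $P$ exists, fix $n = 3f+1$ and two distinct values $v$ (the preferred one) and $u$, and partition the $n$ node identifiers into disjoint sets $V$ of size $f+1$ and $U$ of size $2f$, singling out a particular $p \in V$ and $q \in U$.

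Next I would define the target execution $E_1$: every node is correct, each node in $V$ proposes $v$, each node in $U$ proposes $u$, and the scheduler delivers messages so that in round~1, $p$ receives exactly from $V \cup U'$ for some $U' \subseteq U$ with $|U'| = f$, yielding the view $\{v^{f+1}, u^f\}$ of size $n-f$, while $q$ receives exactly from $(V \setminus \{p\}) \cup U''$ for some $U'' \subseteq U$ with $|U''| = f+1$ and $q \in U''$, yielding the view $\{v^f, u^{f+1}\}$ of size $n-f$. The remaining $f$ senders to each of $p$ and $q$ are delayed past round~1. Because $E_1$ has no Byzantine failures and both $p$ and $q$ have received $n-f$ round-1 messages, the hypothesized fast-termination property forces each of them to decide in round~1.

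To pin down \emph{which} values are decided, I would introduce two auxiliary executions that force specific decisions via classical Byzantine validity. In $E_A$, the $f$ nodes of $U'$ are Byzantine and impersonate $u$-proposers towards $p$, while the other $2f+1$ nodes are correct and all propose $v$; validity then forces any decision in $E_A$ to equal $v$. Crucially, $p$'s proposal and the exact (sender, value) pairs it receives in round~1 in $E_A$ coincide with those in $E_1$, with the same $f$ senders delayed. Symmetrically, in $E_B$ the $f$ nodes of $V \setminus \{p\}$ are Byzantine and impersonate $v$-proposers towards $q$, the remaining $2f+1$ correct nodes all propose $u$, validity forces the decision $u$, and $q$'s round-1 state matches its state in $E_1$.

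Finally, since a deterministic protocol's round-1 action is determined by the node's input together with its received (sender, value) pairs, $p$'s round-1 decision in $E_1$ must coincide with its decision in $E_A$, namely $v$, and $q$'s round-1 decision in $E_1$ must coincide with its decision in $E_B$, namely $u$. Hence $E_1$ is a Byzantine-free execution in which two correct nodes decide differently, contradicting agreement. The main obstacle I anticipate is the ID-level bookkeeping: verifying that the round-1 sender sets \emph{together with} their associated values agree exactly between $E_1$ and each of $E_A, E_B$ (not merely as multisets of values), while each auxiliary execution uses a Byzantine set of size exactly $f$. The asymmetric partition $|V| = f+1$, $|U| = 2f$ is essential here, since trying to merge $E_A$ and $E_B$ into a single execution would require simultaneously corrupting $U' \cup (V \setminus \{p\})$, a set of size $2f$ that exceeds the Byzantine budget --- this is precisely the counting obstruction that makes the two-step indistinguishability inevitable.
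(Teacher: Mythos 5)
There is a genuine gap: you invoke the fast-termination hypothesis in the mixed-input execution $E_1$, where the $f+1$ nodes of $V$ propose $v$ and the $2f$ nodes of $U$ propose $u$. But the property the theorem is meant to rule out --- and the only one the optimizer of Algorithm~\ref{alg:gen} actually provides --- is one-round termination \emph{when all nodes start with the preferred value $v$} and no Byzantine failures manifest. This is exactly how the paper uses the hypothesis: fast termination is applied only in an execution $\sigma_1$ in which every one of the $n-f$ received messages carries $v$. In your $E_1$ both $p$ and $q$ see a mixture of $v$'s and $u$'s, so nothing obliges either of them to decide in round~1; indeed the paper's own protocol would have both of them fall through to the base consensus protocol there. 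Without forced round-1 decisions at both $p$ and $q$ in $E_1$, the transfers to $E_A$ and $E_B$ (which are otherwise fine, including the sender-ID bookkeeping you flag) pin down nothing, and the disagreement never materializes. The auxiliary executions cannot supply the missing decisions either: Byzantine failures manifest in both $E_A$ and $E_B$, so fast termination is not required there.

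The repair is the paper's chain argument. Start from an execution in which all nodes --- including the $f$ Byzantine ones, which behave as if correct --- propose $v$, so a node $p_1$ that receives $n-f$ messages, all equal to $v$, is genuinely forced to decide, and by validity decides $v$. Then flip to $u$ the inputs of the $f$ correct nodes $p_1$ did not hear from and let the Byzantine nodes be two-faced: $p_1$ still decides $v$ by indistinguishability, and some other node $p_2$ that hears mostly $u$'s must \emph{eventually} decide $v$ by agreement (ordinary termination suffices for $p_2$; no fast decision is claimed). Finally, re-cast $p_2$'s view as arising in an execution in which every correct node proposes $u$ and a single node is Byzantine, contradicting classical validity. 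The decision value is propagated forward through agreement across three executions rather than pinned down by two side-by-side validity arguments; that is what lets the impossibility go through under the weaker, realistic fast-termination premise.
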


\begin{figure*}[t]
	\center{
		\includegraphics[scale=0.46]{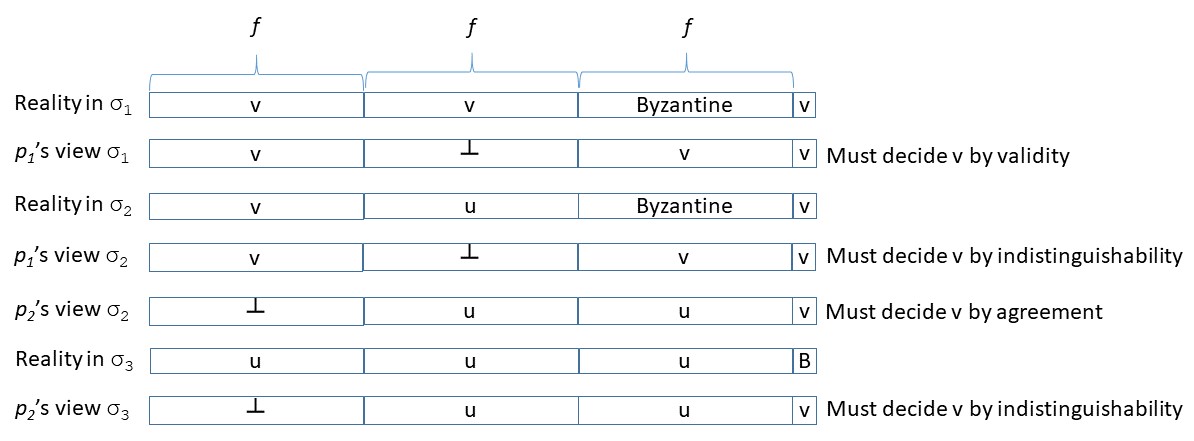}
	}
	\caption{Setup for lower bound proof.}
	\label{fig:lower}
\end{figure*}

\begin{proof}
	Assume by way of contradiction that such a protocol $\mathcal P$ exists.
	Consider an execution $\sigma_1$ of $\mathcal P$ in which $n-f = 2f+1$ correct nodes start with a value $v$ while $f$ additional nodes are Byzantine whom are started with $v$, as illustrated in Figure~\ref{fig:lower}.
	Consider a correct node $p_1$ that receives at most $f+1$ messages from correct nodes and at most $f$ messages from the Byzantine nodes, yet any message potentially sent by the other nodes is not received by $p_1$ until after time $t_1^1$ to be defined shortly.
	The $f$ Byzantine processes send to $p_1$ the same messages they would have generated in $\mathcal P$ had they been started with $v$ (so Byzantine failures do not manifest in this execution).
	Since by assumption when executing $\mathcal P$ node $p_1$ must eventually decide without receiving additional messages, then $p_1$ must ultimately decide at time $t_1^0$ and we set $t_1^0 < t_1^1$.
	As all correct nodes started with $v$ in $\sigma_1$, $p_1$ must have decided $v$ in $\sigma_1$.
	
	Next, consider an execution $\sigma_2$ in which all processes that $p_1$ did not receive their messages until time $t_1^1$ in $\sigma_1$ are started with value $u$, the Byzantine nodes send to $p_1$ exactly the same messages as in $\sigma_1$, and up until time $t_1^1$ all timings are identical to $\sigma_1$.
	Hence, up until time $t_1^0$ node $p_1$ cannot distinguish between $\sigma_1$ and $\sigma_2$ and therefore must decide $v$ in $\sigma_2$ as well.
	
	Further, in execution $\sigma_2$ there is another correct node $p_2$ such that $p_2$ does not receive any of the messages from a subset of $f$ correct nodes that includes $p_1$ until after some time $t_2^1$ to be defined shortly.
	The messages sent to $p_2$ by the Byzantine nodes match exactly their prescribed behavior in $\mathcal P$ had they been started with $u$.
	Since we assume that $\mathcal P$ is an asynchronous Byzantine agreement protocol, then eventually $p_2$ must decide at some time $t_2^0$, and we set $t_2^0 < t_2^1$.
	Since $p_1$ has decided $v$ in $\sigma_2$, by Byzantine agreement $p_2$ must also decide $v$ in~$\sigma_2$.
	
	Finally, consider a third execution $\sigma_3$ of $\mathcal P$ in which one of the $f+1$ correct nodes that started with $v$ in $\sigma_2$ is Byzantine in $\sigma_3$, while all other nodes are correct and start with $u$.
	Further, all timings are the same between $\sigma_2$ and $\sigma_3$ up until time $t_2^1$.
	Hence, $p_2$ cannot distinguish between $\sigma_2$ and $\sigma_3$ up to time $t_2^0$ and therefore must decide $v$ at time $t_2^0$.
	However, all correct processes proposed $u$ in $\sigma_3$, a violation of classical validity.
	A contradiction.
\end{proof}

Notice that the above lower bound does not hold if we allow nodes in Line~\ref{OC:3} of Algorithm~\ref{alg:gen} to wait on a timeout and at least $n-f$ votes.
In such a case, a correct solution with $f<n/3$ is to decide on the value $v$ only if a node collects $3f+1$ votes and all received votes are the same $v$ while the adoption criteria would be to adopt the value $v$ if $f+1$ values are $v$.
Yet, this solution only works when the network is in a synchronous period and there is a known upper bound for the usual message latency.
Also, it requires receiving messages from all nodes, meaning that it always works at the paste of the slowest node.
Further, advancement to the base consensus protocol in case a fast decision cannot be made always waits for the worst case expected network latency.

\section{Discussion}
\label{sec:discuss}

In this work, we explored the communication benefits of biasing consensus into preferring a specific decision value whenever that value can be decided on while preserving the standard consensus correctness requirements.
To that end, we have presented a generic optimizer code that can take an arbitrary consensus protocol and optimize it so that whenever some favorable conditions occur, then the protocol terminates in a single all-to-all communication round.
Locally, the favorable conditions are that the first $n-f$ values a given node receives are all the same preferred value $v$.
Globally, if this occurs for all correct processes, then all correct processes terminate quickly and efficiently.
This is regardless of any timing assumptions and reliability assumptions (other than that the network cannot generate or modify messages), and regardless of any oracles etc.

In practice, our favorable conditions will hold, e.g., whenever all nodes start with the same value and no Byzantine failures manifest.
Yet, up to $f$ benign failures can still occur, and the instantiations of the generic construction to the Byzantine failure mode ensure correctness even when Byzantine failure do occur (when Byzantine failures do occur, this may induce extra communication costs).
As we mentioned before, guessing the preferred consensus value can be done, e.g., in several recent blockchain protocols~\cite{tendermint,BF2019full,Crain2017} (here the biased value is to approve the leader's block proposal), in strongly consistent primary backup replication~\cite{Lamport:1998} (here the biased value is to accept the master's most recent update), and in various control systems~\cite{B777b}.
In case the guessed preferred value cannot be decided on (a ``bad guess''), the only harm is additional communication rounds.

\paragraph*{Acknowledgements} This work was partially funded by ISF grant \#1505/16.

{ \bibliographystyle{abbrv}
	\bibliography{references}
}
\end{document}